\documentclass[twocolumn,aps,showpacs,preprintnumbers, superscriptaddress, amsmath,amssymb]{revtex4}

\usepackage{stmaryrd} 
\usepackage{mathtools} 
\usepackage{stmaryrd} 


\newcommand*{\bra}[1]{\left\langle{#1}\right|}
\newcommand*{\ket}[1]{\left| #1 \right\rangle}

\newcommand*{\commut}[1]{[ #1 ]}

\newcommand*{\acommut}[1]{\{ #1 \}}
\newcommand*{\Acommut}[1]{\left\{ #1 \right\}}

\newcommand*{\poisson}[1]{\llbracket #1 \rrbracket}
\newcommand*{\Poisson}[1]{\left\llbracket #1 \right\rrbracket}

\usepackage{amsmath}
\usepackage{amsfonts}
\usepackage{amssymb}
\usepackage{graphicx}%

\newtheorem{theorem}{Theorem}
\newenvironment{proof}[1][Proof]{\noindent\textbf{#1.} }{\ \rule{0.5em}{0.5em}}

\begin{document}

\author{Renan Cabrera}
\email{rcabrera@princeton.edu}
\affiliation{Department of Chemistry, Princeton University, Princeton, NJ 08544, USA} 

\author{Denys I. Bondar}
\affiliation{Department of Chemistry, Princeton University, Princeton, NJ 08544, USA} 

\author{Herschel A. Rabitz}
\affiliation{Department of Chemistry, Princeton University, Princeton, NJ 08544, USA} 

\title{ Relativistic Wigner function and consistent classical limit for spin $1/2$ particles }

\date{\today}


\begin{abstract}
The relativistic Wigner function for spin $1/2$ particles 
is the subject of active research due to diverse applications. However, 
further progress is hindered by the fabulous complexity of the 
integro-differential equations of motion. We simplify these equations to partial differential equations of the Dirac type that are not only well suited for numerical computation, but also posses a well defined classical limit in a manifestly covariant form.
\end{abstract}

\pacs{03.65.Pm, 05.60.Gg, 05.20.Dd, 52.65.Ff, 03.50.Kk}

\maketitle

\emph{Introduction.} 
In non-relativistic quantum mechanics the phase space representation of the density 
operator is known as the Wigner quasi-probability distribution \cite{Wigner1932}, which is well suited for
the quantum-to-classical transition \cite{blokhintsev2010philosophy, Heller1976, zurek2001sub, bolivar2004quantum, zachos2005quantum, Kapral2006}.  The development of the relativistic Wigner formalism for spin $1/2$ particles, motivated by applications to quantum-plasma, had taken many decades \cite{hakim2011introduction} and reached its modern form in Refs. \cite{hakim1978covariant, hakim1982, vasak1987quantum}. A field-theoretic version of the Wigner function for spin $1/2$ particles was also formulated \cite{Bialynicki-birula1977, Bialynicki-Birula1991, shin1992wigner}.

Contrary to the non-relativistic case, the classical limit of the relativistic Wigner function is a subtle and complicated issue yet to be settled mainly due to the paramount complexity of the equations of motion. The existence of the classical limit in a non-covariant form was established in Refs. \cite{Bialynicki-birula1977, vasak1987quantum, Shin1993}. To restore covariance, Bolivar \cite{bolivar2001classical, bolivar2004quantum} presented a derivation based on rather exotic mathematical devices. However, the relation of Bolivar's classical limit to standard relativistic classical mechanics has not been established and remains a crucial open question. 

In the current Letter we drastically simplify the relativistic Wigner function's equations of motion making the classical limit a trivial exercise [Eq. (\ref{KND-gen})]. The resulting classical equations are manifestly covariant and shown to be equivalent to standard relativistic classical mechanics. Moreover, since the original integro-differential equations of motion are reduced to mere partial differential equations of the Dirac type [Eq. (\ref{WignerPsiEq})], our development can lead to more efficient numerical propagation of the relativistic Wigner function. 

A key element to accomplish this endeavor is to use the recently introduced 
Operational Dynamical Modeling (ODM) \cite{bondar2011hilbert} -- a universal and systematic framework for deriving equations of motion from the evolution of the dynamical average values. In Ref. \cite{bondar2011hilbert}, along with a number of other applications, we utilize this method to infer the Schr\"{o}dinger equation from the Ehrenfest theorems 
by assuming that the coordinate and momentum operators obey the canonical commutation relation. Otherwise if the coordinate and momentum commute, ODM leads to the Koopman-von Neumann mechanics \cite{Koopman1931,Neumann1932, Neumann1932a, mauro2003topics, Gozzi2010}, which is a Hilbert space formulation of non-relativistic classical mechanics with states being represented as complex valued wave functions and observables as commuting self-adjoint operators.
 In Secs. \ref{SalpeterSec}, \ref{Klein-Gordon-section}, and \ref{PauliEqSec}, we demonstrate the versatility of ODM in the case of relativistic classical and quantum mechanics by deriving the spineless Salpeter, Klein-Gordon, Vlasov, and Pauli equations. In the remaining part of the Letter, ODM is applied to relativistic spin $1/2$ particles.

\emph{Classical Spinorial Mechanics.}
Relativistic classical mechanics in manifestly covariant 
form with an electromagnetic interaction can be obtained from the extended Lagrangian
\begin{align}\label{ExtendedL}
 \mathcal{L} = m u^{\mu}u_{\mu} / 2 + e A^{\mu}u_{\mu} + mc^2 / 2,
\end{align} 
where the shell mass condition is incorporated as the integral of motion $u^{\mu}u_{\mu}=c^2$. The Euler-Lagrange equations
lead to the relativistic Newton equations 
\begin{align}
\label{classical-nrm}
m \frac{d u_{\mu}}{ds} = e F_{\mu\nu} u^{\nu}, \quad& u^{\mu}  =   \frac{d X^{\mu}}{ds},
\end{align}
which can be recast in the form of the following 
Hamilton equations (see Sec. \ref{standard-classical-dynamics})
\begin{align}
 \frac{d X^{\mu}}{ds} &= \frac{P^{\mu}-eA^{\mu}}{m}, \\
 \frac{d P_{\mu}}{ds} &=  \frac{e}{m} (\partial_{\mu} A_{\nu})( P^{\nu} -e A^{\nu}  ),
 \label{Newton4}
\end{align}
with $A_{\nu}$ as the four-vector potential. The components $X^{\mu}$ and $P_{\mu}$ define the extended phase space including the standard phase space for $\mu=1,2,3$ and the time-energy variables for $\mu=0$. 

Classical relativistic mechanics can also be expressed in spinorial form in terms of either the Spacetime Algebra \cite{hestenes1999new,doran2003geometric}
or the Algebra of Physical Space \cite{baylis1999electrodynamics,PhysRevA.45.4293,baylis2005quantum,PhysRevA.60.785,Baylis1996book,baylis2010quantum}. Inpired by these developments,  we adapt the Feynman slash notation. Let us express the proper velocity in Feynman slash notation 
\begin{align}
 \frac{d}{ds} X\!\!\!\!/ \equiv   u\!\!\!/ = u^{\mu}\gamma_{\mu} = 
  \frac{1}{4}Tr( u\!\!\!/ \gamma^{\mu}   ) \gamma_\mu,
\end{align}
where the gamma matrices obey the Clifford algebra in the Minkowski space
\begin{align}
  (\gamma^{\mu}\gamma^{\nu} +\gamma^{\mu}\gamma^{\nu}) = 2g^{\mu \nu} \mathbf{1}. 
\end{align} 
Given that $L$ is a Lorentz rotor, i.e., an element of the double cover of the proper Lorentz group,
 the proper velocity transforms as 
$
   u\!\!\!/ \rightarrow u^\prime\!\!\!\!\!/ = L u\!\!\!/ L^{-1},
$
where the inverse of the Lorentz rotor obeys $L^{-1} = \gamma^{0}L^{\dagger}\gamma^0$. 
Any proper velocity is obtained through a Lorentz transformation of the particle at rest with proper velocity $   u\!\!\!/_{\text{rest}} = c \gamma^0$.
Hence, all the information stored in $   u\!\!\!/$ can be included in the Lorentz rotor $L$.
The classical column spinor $\Psi$ is defined as the leftmost columns of $L$ as
 $ \Psi =   L |_{\text{leftmost column}}$ and the following relation
holds \cite{hestenes1973local,hestenes1975observables}
\begin{align}
   \frac{d X^{\mu}}{ds} = \Psi^\dagger c \gamma^0\gamma^{\mu}\Psi.
\label{prop-vel-psi}
\end{align}
Note that the column spinor $\Psi$ contains all the information stored in the Lorentz rotor 
$L$ despite the fact that $\Psi$ is only a single column of $L$ \cite{hestenes1973local,hestenes1975observables}. 
It is possible to rewrite the Dirac equation in terms of Lorentz rotors solely and employ such a representation to obtain solutions
 (see, e.g., Sec. \ref{Dirac-free-particle}).
Using the properties of the classical spinor $\Psi$, the relativistic Newton equation (\ref{Newton4}) is expressed  as
\begin{align}
      m \frac{d}{ds} u_{\mu} =   \Psi^{\dagger} c \gamma^0 \gamma^{\nu} e F_{\mu \nu} \Psi, \quad
 \frac{d}{d s} P_{\mu} =  \Psi^{\dagger}  c \gamma^0 e \partial_{\mu} A\!\!\!/ \Psi,
\label{Lorentz-force4}
\end{align}
which along with (\ref{prop-vel-psi}) define a system of differential equations that 
can be solved once initial conditions are supplied \cite{hestenes1999new,PhysRevA.60.785}.  Classical spinors are further discussed in Sec.  \ref{classical-spinor-appendix}.

\emph{The Dirac equation.} 
Let us derive the Diract equation using ODM. According to Ref. \cite{bondar2011hilbert}, in order to construct a system's dynamical model, ODM requires the following three elements: i) the evolution of the average values drawn in the form of Ehrenfest-like theorems, ii) the definition of the observables' average, and iii) the algebra of the observables.

The first ODM element is given by
\begin{align}
     \frac{d }{ds} \overline{X^{\mu}} =  \overline{ \Psi^\dagger c \gamma^0\gamma^{\mu}\Psi }  ,\qquad
      \frac{d}{d s} \overline{ P_{\mu} } =  \overline{ \Psi^{\dagger}  c \gamma^0 e \partial_{\mu} A\!\!\!/ \Psi }.
  \label{first-ODM-spinorial}
\end{align}
Even though these equation resemble Eqs. (\ref{prop-vel-psi}) and (\ref{Lorentz-force4}), 
they are in fact more general because the averages are performed on an ensemble of
particles that need not have the notion of  well defined trajectories. The second ODM element specifies averaging
in terms of the spinorial Hilbert space where the observables are replaced by the corresponding 
operators
\begin{align}
\label{rel-Ehrenfest-1}
 \frac{d}{ds} \langle \psi | \hat{x}^{\mu} |\psi \rangle & = 
  \langle \psi |  c \gamma^0 \gamma^{\mu}  |\psi \rangle , \\
 \frac{d }{ds} \langle \psi | \hat{p}_{\mu} |\psi \rangle & 
  =  \langle \psi | c e \partial_{\mu} \hat{A}_{\nu} \gamma^0 \gamma^{\nu}  |\psi \rangle, 
\label{rel-Ehrenfest-2}
\end{align}
which are  relativistic generalizations of the Ehrenfest theorems.
The Dirac generator of motion $\boldsymbol{D}$ defined as
\begin{align}
 \hbar \frac{d | \psi \rangle}{ds}  = i \boldsymbol{D} |\psi\rangle, 
\end{align}
applied to Eqs. (\ref{rel-Ehrenfest-1}) and (\ref{rel-Ehrenfest-2}) leads to
\begin{align}
 \langle \psi | i[ \hat{x}^{\mu} , \boldsymbol{D} ] / \hbar | \psi \rangle 
&= c \gamma^0 \gamma^{\mu} , \\
 \langle \psi | i[ \hat{p}_{\mu} , \boldsymbol{D}] / \hbar | \psi \rangle 
  &= c  e  \langle \psi |  \partial_{\mu} \hat{A}_{\nu} \gamma^0 \gamma^{\nu}  | \psi \rangle. 
\end{align}
The expectation values can be removed considering that these identities should be 
valid for all possible initial conditions
\begin{align}
 i[ \hat{x}^{\mu} , \boldsymbol{D} ]  / \hbar = c \gamma^0 \gamma^{\mu} , \quad
 i[ \hat{p}_{\mu} , \boldsymbol{D}]  / \hbar = c  e \partial_{\mu} \hat{A}_{\nu} \gamma^0 \gamma^{\nu}.
\label{dirac-comm-eqs}
\end{align}
The kinematic framework of quantum mechanics is implemented
as the third ODM element by demanding $\hat{x}^\mu$ and $\hat{p}^{\mu}$ to satisfy the canonical commutation 
relations  $[ \hat{x}^{\mu} , \hat{p}_{\nu} ] = - i \delta^{\mu}_{\,\,\,\,\nu} \hbar$. Theorem 1 in Sec. \ref{Sec_Weyl_calculus} is used to convert Eq. (\ref{dirac-comm-eqs}) to the system 
of partial differential equations
\begin{align}
 \frac{\partial}{\partial \hat{p}_{\mu}}  \boldsymbol{D}  = c \gamma^0 \gamma^{\mu} ,  \qquad
 - \frac{\partial}{\partial \hat{x}^{\mu}} \boldsymbol{D} = c e \partial_{\mu} \hat{A}_{\nu} \gamma^0 \gamma^{\nu}, 
\end{align}
whose solution is the self-adjoint Dirac generator
\begin{align}
   \boldsymbol{D} = c \gamma^0 \gamma^{\mu}( \hat{p}_{\mu} - e \hat{A}_{\mu})  -  \gamma^0 m c^2,
\label{Dirac-gens}
\end{align}
where the integration constant $\gamma^0 m c^2$ is specified by demanding consistency 
with the Klein-Gordon equation (see further discussions in Sec. \ref{Klein-Gordon-section}). 
The Dirac generator is independent of the parameter $s$ implying existence of the integral of motion -- the shell mass condition that reads $\boldsymbol{D}=0$. Finally, the Dirac equation is $\boldsymbol{D} \psi = 0$.

The Dirac equation allows for existence of
negative energy particles responsible for the  \emph{zitterbewegung} effect and may ultimately lead to the break down of Eq. (\ref{first-ODM-spinorial}). The physical condition for the absence of negative energy particles requires wave packets to be no more localized than the Compton wavelength $\Delta x \ll \frac{\hbar}{mc}$ \cite{greiner2000relativisticIB}, 
while maintaining variations of the potential not larger than the energy necessary for pair creation 
in a region smaller than the Compton wavelength $ \frac{\Delta V}{\Delta x}  \ll \frac{2mc^2}{\hbar/(mc)} $ \cite{greinerReinhardtBookPart}. 

\emph{Relativistic Wigner function.} The Dirac theory for a spin $1/2$ particle 
was expressed in terms of the extended phase space 
\cite{hakim1978covariant,hakim1982,hakim2011introduction,vasak1987quantum}
and plays a central role in relativistic quantum transport.
Following the ODM derivation of the non-relativistic phase space representation \cite{bondar2011hilbert}, we reformulate the relativistic Wigner function in
Blokhintsev's  double configuration space \cite{Blokhintsev1977,Blokhintsev1941,Blokhintsev1940a,Blokhintsev1940,blokhintsev2010philosophy} leading to the significant simplification of equations of motion.

The first ODM element is again given by Eqs. (\ref{rel-Ehrenfest-1}) and (\ref{rel-Ehrenfest-2}).
Reproducing the steps leading to the Dirac equation, we first define the Wigner generator $\boldsymbol{W}$ as 
\begin{align}
 \hbar \frac{d | \psi \rangle}{ds}  = i \boldsymbol{W} |\psi\rangle, 
\end{align}
and obtain the equations for $\boldsymbol{W}$
\begin{align}
 \label{W-comm-eq}
 i[ \hat{x}^{\mu} , \boldsymbol{W} ] / \hbar= c \gamma^0 \gamma^{\mu} , \quad
 i[ \hat{p}_{\mu} , \boldsymbol{W}] / \hbar = c  e \partial_{\mu} \hat{A}_{\nu} \gamma^0 \gamma^{\nu}. 
\end{align}
In contrast to the Dirac equation's derivation, the quantum algebra of operators 
is extended with additional operators $\hat{\lambda}_{\mu}$ and $\hat{\theta}_{\mu}$ such that
\begin{align}
  {[} \hat{x}^{\mu} , \hat{p}_{\nu}  {]} &= -i \hbar \kappa \delta^{\mu}_{\,\,\,\,\nu}, \qquad & 
  {[} \hat{x}^{\mu} , \hat{\lambda}_{\nu}  {]} &= -i \delta^{\mu}_{\,\,\,\,\nu} , \nonumber \\
  {[}  \hat{p}_{\mu} , \hat{\theta}^{\nu}  {]} &= -i \delta^{\nu}_{\,\,\,\,\mu}, \qquad &
  {[}\hat{\lambda}_{\mu} , \hat{\theta}^{\nu}{]} &= 0,  \label{qa-1}
\end{align}
with all the other commutators vanishing and  $\kappa$ being  a real parameter
that measures the degree of commutativity/quantumness, such that quantum mechanics 
is recovered at $\kappa \rightarrow 1$,  while $\kappa \rightarrow 0$ corresponds to the classical limit. This type of algebra was employed in Ref. \cite{bondar2011hilbert} 
to deduce the Koopman-von Neumann equation.

Another convenient set of operators, to be called the \emph{mirror quantum algebra}, is defined as 
\begin{align}
  \hat{x}^{\prime \mu} &=   \hat{x}^{\mu} + \hbar \kappa \hat{\theta}^{\mu},  \qquad&     \hat{\lambda}^{\prime}_{\mu} &=  - \hat{\lambda}_{\mu},   \nonumber \\
  \hat{p}^{\prime}_{\nu} &=  \hat{p}_{\nu} - \hbar \kappa \hat{\lambda}_{\nu}, \qquad&
  \hat{\theta}^{\prime\mu} &=  - \hat{\theta}^{\mu},
\end{align}
and obeys the canonical quantum commutations
relations (\ref{qa-1}) with the opposite sign 
\begin{align}
\label{qa-mirrow-1}
  {[} \hat{x}^{\prime \mu} , \hat{p}^{\prime}_{\nu}  {]} &=  i \hbar \kappa \delta^{\mu}_{\,\,\,\,\nu} , \qquad&
   {[}  \hat{p}^{\prime}_{\mu} , \hat{\theta}^{\prime \nu}  {]} &= i \delta^{\nu}_{\,\,\,\,\mu}, \nonumber \\
  {[} \hat{x}^{\prime \mu} , \hat{\lambda}^{\prime}_{\nu}  {]} &= i \delta^{\mu}_{\,\,\,\,\nu} , \qquad&
 {[} \hat{\lambda}^{\prime}_{\mu} , \hat{\theta}^{\prime \nu} {]} &= 0.
\end{align}
Moreover, any function  $ f( \hat{x}^{\prime} , \hat{p}^{\prime}   )$ acts as a constant  
with  respect the quantum operators $\hat{x}^{\mu}$ and $\hat{p}_{\mu}$ because
\begin{align}
 \commut{ \hat{x}^{\prime \mu}  , \hat{x}^{ \nu}  } &= 0 ,\qquad&
  \commut{ \hat{x}^{\prime \mu} , \hat{p}_{\nu}  } &= 0 ,\nonumber \\
    \commut{ \hat{p}^{\prime}_{\mu}  , \hat{x}^{\nu}  } &= 0 ,\qquad& 
  \commut{ \hat{p}^{\prime}_{\mu}  , \hat{p}_{\nu}  } &= 0.
\end{align} 
Similar properties also applies to  $ f( \hat{x} , \hat{p}   )$, which acts as a constant  with  
respect to the mirror quantum operators $\hat{x}^{\prime \mu}$ and $\hat{p}^{\prime}_{\mu}$.

According to Theorem 1 in Sec.  \ref{Sec_Weyl_calculus}, 
the commutator equations (\ref{W-comm-eq}) lead to the system 
of differential equations 
\begin{eqnarray}
\frac{i}{\hbar}\left( 
 -i \hbar \kappa  \frac{\partial \boldsymbol{W}}{\partial \hat{p}_{\mu} }
 -i \frac{\partial \boldsymbol{W} }{\partial \hat{\lambda}_{\mu} }
\right) &=& c \gamma^0\gamma^{\mu}  , \\
\frac{i}{\hbar}\left( 
 i \hbar \kappa \frac{\partial \boldsymbol{W} }{\partial \hat{x}^{\mu} } 
  + i  \frac{\partial \boldsymbol{W} }{\partial \hat{\theta}^{\mu} }   \right) &=& c e \partial_{\mu} \hat{A}_{\nu}\gamma^0\gamma^{\nu},
\end{eqnarray}
with the following solution 
\begin{align}
  \boldsymbol{ W } = \frac{\gamma^0}{\kappa}
   \left( c \gamma^{\mu} \left[ \hat{p}_{\mu} - e \hat{A}_{\mu} 
+ f_{\mu}( \hat{x}^{\prime} , \hat{p}^{\prime} )
 \right]  -  \kappa  m c^2 \right) 
    ,
\label{unified-Dirac}
\end{align}
where $ f_{\mu}( \hat{x}^{\prime}, \hat{p}^{\prime})$ is an arbitrary function that can be absorbed by a gauge transformation. 
A particularly useful choice is $ f_{\mu} = - (\hat{p}_{\mu} - \hbar \kappa \hat{\lambda}_{\mu} )- e \hat{A}_{\mu}( \hat{x} + \hbar \kappa \hat{\theta}  ) $, which leads to
\begin{align}
  \boldsymbol{ W } = \gamma^0
   \left( \frac{c \gamma^{\mu}}{\kappa}[ \hbar \kappa \hat{\lambda}_{\mu}  - e \hat{A}_{\mu}(\hat{x}) 
+e \hat{A}_{\mu}( \hat{x} + \hbar \kappa \hat{\theta}  )
 ]  -   m c^2 \right) 
\label{unified-Dirac-A-A}.
\end{align}
The Wigner generator $\boldsymbol{W}$ can be rewritten by defining an independent classical 
algebra of operators $ \hat{X}^{\mu} , \hat{P}_{\nu} ,  \hat{\Lambda}^{\mu} , \hat{\Theta}_{\nu} $ obeying
\begin{align}
 \label{commutation-classical1}
 [ \hat{X}^{\mu} , \hat{P}_{\nu} ] &= 0, \qquad&
 {[} \hat{X}^{\mu} , \hat{\Lambda}_{\nu}  {]} &= -i \delta^{\mu}_{\,\,\,\,\nu}, \\
 {[}  \hat{P}_{\mu},   \hat{\Theta}^{\nu}    {]} &= -i \delta^{\nu}_{\,\,\,\,\mu}, \qquad&
 {[}\hat{\Lambda}_{\mu} , \hat{\Theta}^{\nu}{]} &= 0  ,
\end{align}
such that the extended quantum algebra can be consistently expressed  as
\begin{align}
  \label{quantum-classical-op-1}
  \hat{x}^{\mu} &= \hat{X}^{\mu} - \frac{\hbar\kappa}{2} \hat{\Theta}^{\mu} ,\qquad&
  \hat{\lambda}_{\mu} &= \hat{\Lambda}_{\mu} ,\\\
  \hat{p}_{\mu} &= \hat{P}_{\mu} + \frac{\hbar\kappa}{2} \hat{\Lambda}_{\mu}  ,\qquad&
  \hat{\theta}^{\mu} &= \hat{\Theta}^{\mu}. 
\end{align}
The Wigner generator $\boldsymbol{W}$ written in terms of the classical operators is
\begin{align}
  \boldsymbol{ W } = 
    c\gamma^0 \gamma^{\mu} \left[ \hbar  \hat{\Lambda}_{\mu}
 + \frac{e}{\kappa} \left(  \hat{A}_{\mu}^{+} -  \hat{A}_{\mu}^{-} \right) \right]
   -   \gamma^0 m c^2  
   ,
\label{unified-Dirac-AA}
\end{align}
with $A_{\mu}^{-}=   A_{\mu}( \hat{X} - \frac{\hbar \kappa}{2} \hat{\Theta}  )  $ and $A_{\mu}^{+}=   A_{\mu}( \hat{X} + \frac{\hbar \kappa}{2} \hat{\Theta}  )  $. 
This form of the Wigner generator has  
the explicit and well defined classical limit 
\begin{align}
\label{KND-gen}
& \lim_{\kappa \rightarrow 0} \boldsymbol{W} = \hbar \boldsymbol{K} , \nonumber\\
 & \boldsymbol{K} 
  = c \gamma^0\gamma^{\mu} \left( \hat{\Lambda}_{\mu} + e \frac{\partial}{\partial \hat{X}^{\nu}}
 \hat{A}_{\mu} \hat{\Theta}^{\nu}  \right) - \gamma^0 \frac{mc^2}{\hbar},
\end{align}
where $\boldsymbol{K}$ shall be referred to as the Koopman-von Neumann-Dirac (KvND) generator, overlooked in the literature, which is a relativistic generalization of the non-relativistic Koopman-von Neumann generator \cite{Koopman1931,Neumann1932, Neumann1932a, mauro2003topics, Carta2006, Gozzi2010, Gozzi2011, Cattaruzza2011}. Bogdanov \cite{bogdanov} reported a special case of Eq. (\ref{KND-gen}). 

A gauge independent classical generator can be obtained from Eq. (\ref{KND-gen}) using 
kinetic instead of the canonical momentum. The kinetic 
momentum equation of motion in the Heisemberg picture reads
\begin{align}
m \frac{d \hat{u}_{\mu} }{ds} = i \commut{ \hat{p}_{\mu} - e \hat{A}_{\mu} , \boldsymbol{K} }
=  \gamma^0\gamma^{\nu} e \hat{F}_{\mu \nu}, 
\end{align}
where $\hat{F}_{\mu\nu}$ is the electromagnetic field as a function of the position operator. 
A similar procedure can be carried out to find the variation of the position operator $\hat{x}^{\mu}$ 
\begin{align}
 \frac{d \hat{x}^{\mu}}{ds} = c \gamma^0\gamma^{\mu},
\end{align}
such that one recovers the standard formulation of classical relativistic mechanics (\ref{classical-nrm}). It is also
possible to define a gauge independent classical generator in terms 
of the position and proper velocity 
\begin{align}
 \boldsymbol{V} = i \gamma^0 \gamma^{\mu}\left( \frac{\partial}{\partial X^{\mu}} + e F_{\nu \mu} \frac{\partial}{\partial u_{\nu}} \right).
\end{align}

The Wigner generator $\boldsymbol{W}$ can be explicitly represented through the extended phase 
representation resulting in an intricate system of integro-differential equations \cite{vasak1987quantum,Bialynicki-Birula1991}. 
A better alternative, leading to simpler analytical analyses as well as efficient numerical methods, is to use Blokhintsev's $X\Theta$-representation 
\begin{align}
  \hat{X}^{\mu} =  X^{\mu}, \quad
  \hat{\Lambda}_{\mu}  = i \frac{\partial \,\,\,}{ \partial X^{\mu}}, \quad
   \hat{P}_{\mu} =   -i \frac{\partial \,\,\,}{ \partial \Theta^{\mu}}, \quad
  \hat{\Theta}^{\mu}  = \Theta^{\mu},
 \label{PTheta-Representation}
\end{align}
which is equivalent to doubling the configuration space with the additional variable $\Theta$.
The spinorial equation of motion in  the $X\Theta$ representation is 
a system of partial differential equations 
\begin{align}\gamma^0
  \left(  c \gamma^{\mu}\left[ i \hbar \frac{\partial}{\partial X^{\mu}} 
 + \frac{e}{\kappa}(  A_{\mu}^{+} -  A_{\mu}^{-} ) \right]
   -   m c^2  \right)\psi(X,\Theta) = 0, \label{WignerPsiEq}
\end{align}
where $\psi(X,\Theta)$ is the spinorial relativistic Wigner function encapsulating all the  dynamical information.

In relativistic quantum mechanics the Dirac current is defined as $J^{\mu} = \psi^{\dagger} \gamma^{\mu}\gamma^0 \psi  $, 
which is written in slash notation as
$J\!\!\!/ = \psi \psi^{\dagger} \gamma^0$ \cite{hestenes1973local,hestenes1975observables}. Therefore, the current 
in the $X\Theta$-representation reads
\begin{align}
  J_{X\Theta}(X,\Theta) = \psi(X,\Theta) \psi^{\dagger}(X,\Theta)\gamma^0.
\end{align}
Considering that $\Theta$ is the conjugate variable of $P$ [Eq. (\ref{PTheta-Representation})],
the current in the $XP$-representation becomes
\begin{align}
 J_{XP}(X,P) =  \int d^4 \Theta \, \psi(X,\Theta )  
 \psi^\dagger(X, \Theta ) \gamma^0 \exp( - i P_{\nu} \Theta^{\nu}  ), 
\label{J-Wigner}
\end{align}
which is the relativistic Wigner function in the extended phase space for spin $1/2$ particles. Contrary to the non-relativistic case when the Wigner function can be interpreted as a probability density (though not positively defined) as well as a Koopman-von Neuman-like wave function \cite{bondar2012wigner},
the relativistic Wigner function is a current. 

Note Eq. (\ref{J-Wigner}) is not gauge independent
because the canonical momentum $P$ is the sum of both the kinetic momentum 
and the gauge dependent vector potential. Nevertheless, there are gauge independent representations solely
in terms of the kinetic momentum \cite{stratonovich1956gauge,serimaa1986gauge,vasak1987quantum}. 

\emph{Outlook}. Having simplified the relativistic Wigner function's equations of motion for spin $1/2$ particles, we foremost settle down the issue of the classical limit. More important, since the simplified dynamical equation [Eq. (\ref{WignerPsiEq})] is of the Dirac type, possessing efficient numerical schemes \cite{Fillion-Gourdeau2011}, the current work may open new horizons for further advancement of numerical algorithms. Besides applications in relativistic quantum transport theory \cite{hakim2011introduction}, the presented Wigner function construction can be readily extended to non-abelian relativistic statistical mechanics 
 (see, e.g., Ref. \cite{ochs1998wigner}) as well as to solid state physics where 
the effective relativistic Wigner function is needed \cite{morandi2011wigner}. 

\emph{Acknowledgments.} The authors are financially supported by DARPA, NSF, and ARO.

\begin{widetext}
\begin{center}
{\bf Supplementary Material for: ``Relativistic Wigner function and consistent classical limit for spin $1/2$ particles''}
\end{center}

In the main text we develop relativistic quantum mechanics through Operational Dynamicla Modelling (ODM) \cite{bondar2011hilbert}. The purpose of this supplementary material is to provide background information and additional examples of the
use of ODM.  

\tableofcontents

\section{Classical Dynamics in the Extended Phase Space}
\label{standard-classical-dynamics}
The Lagrangian of a relativistic particle in an electromagnetic field without the shell mass constraint is 
\begin{align}
 \label{extended-Lagrangian}
 \mathcal{L} = \frac{m}{2} u^{\mu}u_{\mu} + e A^{\mu}u_{\mu} + \frac{mc^2}{2},
\end{align}
where $u^{\mu} = \frac{d x^{\mu}}{ds}$ are the components of the proper velocity (four-velocity) defined
in the Minkowski space with a metric with diagonal metric elements $\{1,-1,-1,-1\}$. 
More precisely, this is the time-extended form  of the Lagrangian of a relativistic particle 
with electromagnetic interaction \cite{walter1998classical}. In this formalism the shell constrain
 \begin{align}
u^{\mu}u_{\mu} = c^2
\label{uu}
\end{align} 
is implemented as an integral of motion in order to provide physical solutions. 
The canonical momentum is denoted with covariant indices as
\begin{align}
 P_{\mu} \equiv \frac{\partial \mathcal{L}}{\partial u^{\mu}} = m u_{\mu} + e A_{\mu},
\end{align} 
where the physical momentum and four-vector potential are $P^{\mu} = (E/c, \boldsymbol{P})$,  $A^{\mu} = ( \phi/c , \boldsymbol{A}  )$, 
where bold symbols stand for vectors in the standard three-dimensional Euclidean space.

The time-extended Lagrangian leads to the \emph{extended phase space} Hamiltonian
\begin{align}
  \mathcal{H} \equiv P_{\mu} u^{\mu} -\mathcal{L} = \frac{(P^{\mu}-e A^{\mu})( P_{\mu} -eA_{\mu} ) }{2m} - \frac{mc^2}{2}.
\label{te-H}
\end{align}
This Hamiltonian does not explicitly depend on the parameter $s$, 
therefore, it is an invariant  dynamical integral of motion. The particular integration condition 
that leads to physical solutions according to the shell mass (\ref{uu}) is
 \begin{align}
\mathcal{H}=0.
\label{H-zero}
\end{align}

The equations of motion also can be obtained in terms of the extended phase space Poisson 
brackets defined as
\begin{align}
   \poisson{F,G} \equiv 
 \frac{\partial F}{\partial X^{\mu}} \frac{\partial G}{\partial P_{\mu}} 
 - \frac{\partial G}{\partial X^{\mu}} \frac{\partial F}{\partial P_{\mu}}  
\end{align}
such that 
\begin{align}
 \frac{dF}{ds} = \poisson{F, \mathcal{H} }.
\end{align}
The four-velocity components are recovered as
\begin{align}
  \frac{d X^{\mu}}{ds} &= \frac{\partial \mathcal{H}}{ \partial P_{\mu}} = \frac{P^{\mu}-eA^{\mu}}{m}
\label{fourvel}
\end{align}
and the canonical four-force equation is 
\begin{align}
\frac{d P_{\mu}}{ds} &= -\frac{\partial \mathcal{H}}{ \partial X^{\mu}} 
  = \frac{e}{m} (\partial_{\mu} A_{\nu})( P^{\nu} -e A^{\nu}  ), \nonumber \\
\frac{d P_{\mu}}{ds} &= e(\partial_{\mu} A_{\nu}) u^{\nu}.
\label{force4}
\end{align}
The left side of this equation can be expressed as
\begin{align}
 \frac{d P_{\mu}}{ds} = m \frac{d u_{\mu}}{ds} + e\frac{d A_{\mu}}{ds} = 
m \frac{d u_{\mu}}{ds} + e\frac{\partial A_{\mu}}{\partial X^{\nu}} \frac{d X^{\nu}}{d s} = 
 m \frac{d u_{\mu}}{ds} + e\frac{\partial A_{\mu}}{\partial X^{\nu}} u^{\nu}
\label{dp-canonical},
\end{align}
which leads to the Lorentz force in covariant form upon 
substitution in Eq. (\ref{force4})
\begin{align}
m \frac{d u_{\mu}}{ds} &= e(  \partial_{\mu} A_{\nu} - \partial_{\nu}A_{\mu} )  u^{\nu}.
\end{align}
The  latter is expressed in terms of the Faraday electromagnetic tensor
 $F_{\mu \nu} = \partial_{\mu} A_{\nu} - \partial_{\nu}A_{\mu}  $ as
\begin{align}
m \frac{d u_{\mu}}{ds} &= e F_{\mu\nu} u^{\nu}.
\end{align}

The action can be calculated  using the time-extended Lagrangian $\mathcal{L}$ (\ref{extended-Lagrangian})
or the corresponding standard Lagrangian $L$, where the time is considered as a parameter. 
Both methods must lead to the same result, implying
\begin{eqnarray}
 \int \mathcal{L} ds &=& \int L dt ,\\
 \int \mathcal{L} \frac{ds}{dt} dt  &=& \int L dt,
\end{eqnarray}
thus, up to some exact differential one has the following equality
\begin{equation}
 L = \mathcal{L} \frac{ds}{dt}.
\end{equation}

\section{Noncommutative Analysis: The Weyl Calculus}
\label{Sec_Weyl_calculus}
Noncommutative analysis \cite{Weyl1950, Maslov1976,Nazaikinskii1996} is a broad and active field of mathematics with a number of important applications. This branch of analysis aims at identifying functions of noncommutative variables and specifying operations with such objects. There are many ways of introducing functions of operators; however, the choice of a particular definition is a matter of convenience \cite{Nazaikinskii1992}. 

To make the paper self-consistent, we shall review basic results from the Weyl calculus, which is a popular version of 
noncommuting analysis. Theorem \ref{Th_Weyl_commutator_theorem} plays a crucial role in the current paper. Even though we prove this 
result within the Weyl calculus, it is valid in more general settings (see, e.g., Ref. \cite{Maslov1976} and 
page 63 of Ref. \cite{Nazaikinskii1996}).

The starting point is the well known fact that Fourier transforming back and forth  does not change a 
sufficiently smooth function of $n$-arguments,
\begin{align}\label{Fourier_identity}
	f(\lambda_1, \ldots, \lambda_n) = \frac{1}{(2\pi)^n} \int \prod_{l=1}^n d\xi_l d\eta_l 
		\exp\left[ i \sum_{q=1}^n \eta_q(\lambda_q - \xi_q) \right] f(\xi_1, \ldots, \xi_n).
\end{align}
Following this observation, we define the function of noncommuting operators within the Weyl calculus as 
\begin{align}
\label{Weyl_main_definition}
	f(\hat{A}_1, \ldots, \hat{A}_n) = \frac{1}{(2\pi)^n} \int \prod_{l=1}^n d\xi_l d\eta_l
		\exp\left[ i \sum_{q=1}^n \eta_q(\hat{A}_q - \xi_q) \right] f(\xi_1, \ldots, \xi_n),
\end{align}
where the exponential of an operator is specified by the Taylor expansion,
\begin{align}\label{OperatorExp}
	\exp(\hat{A}) = \sum_{k=0}^{\infty} \frac{\hat{A}^k}{ k! }.
\end{align}
The identity
\begin{align}\nonumber
	f^{\dagger} (\hat{A}_1, \ldots, \hat{A}_n) = f(\hat{A}_1^{\dagger}, \ldots, \hat{A}_n^{\dagger})
\end{align}
implies that the function of self-adjoint operators (\ref{Weyl_main_definition}) is itself a self-adjoint operator. 
Moreover, one may demonstrate that  
\begin{align}\label{Weyl_derivative_of_function}
	f'_{\hat{A}_k} (\hat{A}_1, \ldots, \hat{A}_n) & = \lim_{\epsilon\to 0} \frac 1{\epsilon} 
		\left[ f(\hat{A}_1, \ldots, \hat{A}_k + \epsilon, \ldots, \hat{A}_n)  - f(\hat{A}_1, \ldots, \hat{A}_k, \ldots, \hat{A}_n)\right]\nonumber\\
		&= \frac{1}{(2\pi)^n} \int \prod_{l=1}^n d\xi_l d\eta_l \, i\eta_k 
			\exp\left[ i \sum_{q=1}^n \eta_q(\hat{A}_q - \xi_q) \right] f(\xi_1, \ldots, \xi_n) \nonumber\\
		&= \frac{1}{(2\pi)^n} \int \prod_{l=1}^n d\xi_l d\eta_l
			f(\xi_1, \ldots, \xi_n) \left( -\frac{\partial}{\partial \xi_k} \right) \exp\left[ i \sum_{q=1}^n \eta_q(\hat{A}_q - \xi_q) \right]  \nonumber\\
		&= \frac{1}{(2\pi)^n} \int \prod_{l=1}^n d\xi_l d\eta_l
			\exp\left[ i \sum_{q=1}^n \eta_q(\hat{A}_q - \xi_q) \right] f'_{\xi_k} (\xi_1, \ldots, \xi_n).
\end{align}
Equation (\ref{Weyl_main_definition}) defines a one-to-one mapping between a function $f(\xi_1, \ldots, \xi_n)$ and a linear operator $f(\hat{A}_1, \ldots, \hat{A}_n)$. By the same token, Eq. (\ref{Weyl_derivative_of_function}) establishes a one-to-one mapping between the derivative of a function and the derivative of a linear operator.

The following theorem is of fundamental importance:
\begin{theorem}\label{Th_Weyl_commutator_theorem}
	Let $\hat{A}_1, \ldots, \hat{A}_n$ be some operators and $\hat{C}_k = \commut{\hat{A}_k, \hat{B}}$, $k=1,\ldots,n$. If 
	$\commut{ \hat{A}_k, \hat{C}_l } = \commut{  \hat{B}, \hat{C}_k } = 0$, $k,l=1,\ldots,n$, then
	\begin{align}\label{Weyl_MainTh_equality}
		\commut{ f(\hat{A}_1,\ldots, \hat{A}_n), \hat{B} } = \sum_{k=1}^n \commut{\hat{A}_k, \hat{B}} f_{\hat{A}_k}' (\hat{A}_1,\ldots, \hat{A}_n),	
	\end{align}
	where $f(\hat{A}_1,\ldots, \hat{A}_n)$ is defined by means of Eq. (\ref{Weyl_main_definition}).
\end{theorem}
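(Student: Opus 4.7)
The plan is to insert the Weyl-calculus integral representation (\ref{Weyl_main_definition}) into the left-hand side of (\ref{Weyl_MainTh_equality}) and push the commutator with $\hat B$ under the integral sign. Since the $\xi_q$ are c-numbers, one may split
\begin{align*}
\exp\!\Bigl[i\sum_q \eta_q(\hat A_q-\xi_q)\Bigr] = \exp\!\Bigl(-i\sum_q \eta_q\xi_q\Bigr)\,\exp\!\Bigl(i\sum_q \eta_q\hat A_q\Bigr),
\end{align*}
so the whole problem reduces to evaluating $[\exp(i\sum_q\eta_q\hat A_q),\hat B]$.

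The key step is the standard identity: if $[\hat X,\hat B]$ commutes with both $\hat X$ and $\hat B$, then $[e^{\hat X},\hat B]=[\hat X,\hat B]\,e^{\hat X}$. To apply this with $\hat X = i\sum_q \eta_q \hat A_q$, I would verify the hypotheses using the assumptions of the theorem: one has $[\hat X,\hat B]=i\sum_k\eta_k\hat C_k$, and since by assumption $[\hat A_j,\hat C_k]=0$ and $[\hat B,\hat C_k]=0$ for all $j,k$, both $[\hat X,[\hat X,\hat B]]$ and $[\hat B,[\hat X,\hat B]]$ vanish. The quickest proof of the identity itself is the one-parameter trick: set $F(t)=e^{t\hat X}\hat B e^{-t\hat X}$, note $F'(t)=e^{t\hat X}[\hat X,\hat B]e^{-t\hat X}=[\hat X,\hat B]$ (using that $[\hat X,\hat B]$ commutes with $\hat X$), integrate to get $F(1)=\hat B+[\hat X,\hat B]$, and rearrange.

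With the exponential commutator in hand, substituting back yields
\begin{align*}
[f(\hat A_1,\ldots,\hat A_n),\hat B] = \sum_{k=1}^n \hat C_k \cdot \frac{1}{(2\pi)^n}\int\prod_l d\xi_l\,d\eta_l\; i\eta_k \exp\!\Bigl[i\sum_q\eta_q(\hat A_q-\xi_q)\Bigr] f(\xi_1,\ldots,\xi_n),
\end{align*}
where I have used that $\hat C_k$ commutes with every $\hat A_q$, so it can be pulled out in front of the exponential without any reordering cost. By formula (\ref{Weyl_derivative_of_function}) the remaining integral is exactly $f'_{\hat A_k}(\hat A_1,\ldots,\hat A_n)$, and substituting $\hat C_k=[\hat A_k,\hat B]$ gives the claimed identity.

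The main obstacle is justifying the exponential commutator formula for a sum of noncommuting operators $\hat A_q$ — without the hypothesis $[\hat A_j,\hat C_k]=0$ this step would fail, and one would pick up an infinite Baker–Campbell–Hausdorff tail. The given hypotheses are precisely what truncate this series after the first term, so the argument is tight; apart from this, the proof is a direct manipulation of the integral representation and requires no control of convergence beyond what (\ref{Weyl_main_definition}) already presupposes.
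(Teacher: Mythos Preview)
Your proof is correct and follows essentially the same route as the paper: both reduce the problem to computing $[\exp(\hat A),\hat B]$ for $\hat A=i\sum_q\eta_q(\hat A_q-\xi_q)$, use the centrality of $\hat C=[\hat A,\hat B]$ to obtain $[\exp(\hat A),\hat B]=\hat C\exp(\hat A)$, and then identify the resulting integral with $f'_{\hat A_k}$ via (\ref{Weyl_derivative_of_function}). The only cosmetic difference is that the paper establishes the exponential commutator by expanding the Taylor series and using $[\hat A^k,\hat B]=k\hat C\hat A^{k-1}$ term by term, whereas you use the one-parameter interpolation $F(t)=e^{t\hat X}\hat Be^{-t\hat X}$; these are equivalent standard arguments.
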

\begin{proof}
	We introduce $\hat{A} \coloneqq i \sum_{q=1}^n \eta_q(\hat{A}_q - \xi_q)$ and $\hat{C} \coloneqq \commut{\hat{A}, \hat{B}} = i\sum_{q=1}^n \eta_q \hat{C}_q$;
	hence, $\commut{ \hat{A}, \hat{C} } = \commut{ \hat{B}, \hat{C} } = 0$. From the following identity:
	\begin{align}
		\commut{ \hat{A}_1 \cdots \hat{A}_n, \hat{B} }  &= 
			\sum_{k=1}^n \hat{A}_1 \cdots \hat{A}_{k-1} \commut{ \hat{A}_k, \hat{B} } \hat{A}_{k+1} \cdots \hat{A}_n,   
	\end{align}
	we obtain
	$
		\commut{ \hat{A}^k, \hat{B} } = k \hat{C} \hat{A}^{k-1}. 
	$
	It follows from Eq. (\ref{OperatorExp}) that
	\begin{align}
		\commut{ \exp(\hat{A}), \hat{B} } = \hat{C} \exp(\hat{A}) = \sum_{q=1}^n \hat{C}_q \frac{\partial}{\partial \hat{A}_q} \exp(\hat{A}).
	\end{align}
	Having substituted this equality into Eq. (\ref{Weyl_main_definition}), we finally reach Eq. (\ref{Weyl_MainTh_equality}).
\end{proof}

In particular, the canonical commutation relations $[ x, p ]= i \hbar$ imply the following identities
\begin{align}
 \label{xp-derivative}
 [ \hat{x} , G(\hat{x}, \hat{p}) ] &= 
 i \hbar \frac{\partial  G(\hat{x}, \hat{p}) }{\partial \hat{p}} \\
  [ \hat{p} , G(\hat{x}, \hat{p}) ] &= 
 -i \hbar \frac{\partial  G(\hat{x}, \hat{p})   }{\partial \hat{x}}.
\label{px-derivative}
\end{align}

\section{The Representation of Operators}
\label{algebra-representations}
The commutation relation
\begin{equation}
 [ \hat{x}^{\mu} , \hat{p}_{\nu} ] = - i \delta^{\mu}_{\,\,\,\,\nu} \hbar,
\end{equation}
is realized by the following representation
\begin{eqnarray}
  \hat{x}^{\mu} &\rightarrow & x^{\mu} \\
 \hat{p}_{\mu}  &\rightarrow& i \hbar \frac{\partial \,\,\,}{ \partial x^{\mu}}.
\end{eqnarray}
The momentum operator in non-relativistic mechanics $\hat{\mathbf{p}}$ is associated with $p_{\mu}$
according to $p^{\mu} = \{ E ,  \mathbf{p} \}$, which implies $p_{\mu} = \{ E ,  -\mathbf{p} \}$ and is consistent with
\begin{align}
  \hat{E} &\rightarrow  i \hbar \frac{\partial }{\partial t} \\ 
  \hat{\mathbf{p}}  &\rightarrow  - i \hbar \boldsymbol{\nabla}.
\end{align}
Similarly, the following commutator relations
\begin{eqnarray}
   [ \hat{X}^{\mu} , \hat{\Lambda}_{\nu} ] &=& - i \delta^{\mu}_{\,\,\,\,\nu},  \\
   {[} \hat{\Theta}^{\nu} ,   \hat{P}_{\mu}   {]} &=& - i \delta^{\mu}_{\,\,\,\,\nu}, 
\end{eqnarray}
can be expressed in the extended phase space as 
\begin{eqnarray}
  \hat{X}^{\mu} &\rightarrow & X^{\mu} ,\\
  \hat{\Lambda}_{\mu}  &\rightarrow& i \frac{\partial \,\,\,}{ \partial X^{\mu}},\\
   \hat{P}_{\mu} &\rightarrow & P_{\mu}, \\
  \hat{\Theta}^{\mu}  &\rightarrow& -i \frac{\partial \,\,\,}{ \partial P_{\mu}}.
\label{XP-representation}
\end{eqnarray}

An alternative representation is given by
\begin{eqnarray}
  \hat{X}^{\mu} &\rightarrow & X^{\mu} \\
  \hat{\Lambda}_{\mu}  &\rightarrow& i \frac{\partial \,\,\,}{ \partial X^{\mu}},\\
   \hat{P}_{\mu} &\rightarrow &  i \frac{\partial \,\,\,}{ \partial \Theta^{\mu}}, \\
  \hat{\Theta}^{\mu}  &\rightarrow& \Theta^{\mu}
\label{XTheta-representation},
\end{eqnarray}
which is defined as the \emph{double configuration space} wherein $\Theta^{\mu}$ acts
as a coordinate extension to $X^{\mu}$.

\section{The Classical  Spinor }
\label{classical-spinor-appendix}
The formulalism of classical spinors in classical relativistic mechanics naturally arises
from the standard formalism in the slash notation \cite{hestenes1973local,hestenes1999new,PhysRevA.45.4293}. 
Newton's equation in the Minkowski space for electromagnetic fields is
\begin{align}
m \frac{d u_{\mu}}{ds} &= e F_{\mu\nu} u^{\nu}.
\end{align}
The electromagnetic field  in this notation $F\!\!\!\!/$  is expressed through the 
gamma matrices as
\begin{equation}
 F\!\!\!\!/ = \frac{1}{2}\sigma^{\mu\nu} F_{\mu\nu},
\end{equation}
such that Newton's equation becomes 
\begin{align}
 \gamma^{0} \frac{d}{ds} u\!\!\!/ 
  &= \left \langle \gamma^{0} e F\!\!\!\!/ u\!\!\!/  \right \rangle_{H}.
\end{align}
This formula can be proven using the following identity
\begin{align}
  \langle \gamma^0 \sigma^{\mu\nu} \gamma_{\xi}  \rangle_H &= 
  \gamma^0( \gamma^{\mu}\delta^{\nu}_{\xi} - \gamma^{\nu} \delta^{\mu}_{\xi}  ),
\end{align}
with  $\langle A \rangle_H = \frac{1}{2}(A + A^\dagger)$, such that
\begin{align}
 \left \langle \gamma^{0} e F\!\!\!\!/ u\!\!\!/  \right \rangle_{H}
 =
 \frac{1}{2}\left \langle \gamma^{0} \sigma^{\mu\nu} \gamma_{\xi}  \right \rangle_{H} eF_{\mu\nu} u^{\xi}
 = 
  \frac{1}{2}\gamma^0( \gamma^{\mu}\delta^{\nu}_{\xi} - \gamma^{\nu} \delta^{\mu}_{\xi}  ) eF_{\mu\nu} u^{\xi}
 = \gamma^0 \gamma^{\mu} e F_{\mu\xi} u ^{\xi}.
\end{align}

The Lorentz rotor $L$ is defined as the double cover of the Lorentz group and 
can be parametrized in terms of three purely spatial rotation angles $\theta^k$ and three 
space-time rapidity variables $\eta_{k}$, such that  

\begin{align}
 L = B R = \exp{\left(\frac{1}{2} \eta_{k} \gamma^{0}\gamma^{k}  \right)}  
 \exp{\left( \frac{1}{4} \epsilon_{jkl} \theta^{j} \gamma^{k}\gamma^{l}  \right)}, 
\label{classical-Lorentz-rotor} 
\end{align}
where $B$ is a Hermitian matrix representing boosts and $R$ is a unitary matrix representing 
spatial rotations isomorphic to $SO(3)$.
The classical column spinor $\Psi$ is defined as the leftmost column of $L$ 
 as $ \Psi =   L |_{\text{leftmost column}}$ and the following relation
results \cite{hestenes1973local,hestenes1975observables}
\begin{align}
   \frac{d X^{\mu}}{ds} = \Psi^\dagger c \gamma^0\gamma^{\mu}\Psi,
\label{prop-vel-psi}
\end{align}
 where the proper velocity is expressed in the slash notation as
\begin{align}
 \frac{d}{ds} X\!\!\!\!/ \equiv   u\!\!\!/ = u^{\mu}\gamma_{\mu} = 
  \frac{1}{4}Tr( u\!\!\!/ \gamma^{\mu}   ) \gamma_\mu.
\end{align}

The Lorentz transformation law for the proper velocity is 
\begin{align}
   u\!\!\!/ \rightarrow u^\prime\!\!\!\!\!/ = L u\!\!\!/ L^{-1},
\end{align}
where the inverse Lorentz rotor obeys a simple formula $L^{-1} = \gamma^{0}L^{\dagger}\gamma^0$.
The proper velocity can be obtained as the active boost of the proper velocity of a 
particle initially at rest with proper velocity $ u\!\!\!/_{rest} = c \gamma^0$. This means that in general 
it is possible to find a Lorentz rotor such that
\begin{align}
  u\!\!\!/ = c L \gamma^0 L^{-1}. 
\end{align}
Using the decomposition $L = BR$, the proper velocity becomes 
\begin{align}
 u\!\!\!/ = c B^2 \gamma^0 = c \gamma^0 B^{-2}.
\end{align}
Thus, the boost from the proper velocity can be recovered as
\begin{align}
  B = \sqrt{ \frac{u\!\!\!/\gamma^0 }{c}  } = 
\frac{\frac{1}{c}  u\!\!\!/ \gamma^0 + \boldsymbol{1}}{\sqrt{2(\gamma+1)}},
\label{boost-formula}
\end{align}
with $\gamma = \frac{1}{4c}Tr(\gamma^0u\!\!\!/) = u^0/c$ as the relativistic length contraction factor.

Classical dynamics is described by the following equations 
\begin{align}
  \frac{d}{ds} X\!\!\!\!/ &=   u\!\!\!/ \\
  \gamma^{0} \frac{d}{ds} u\!\!\!/ ,
  &= \left \langle \gamma^{0} e F\!\!\!\!/ u\!\!\!/  \right \rangle_{H}, 
\end{align}
but the dynamics of the proper velocity can be assumed by the 
corresponding Lorentz rotor $L$. The general variation of the Lorentz rotor 
is expressed as \cite{baylis1999electrodynamics,Baylis1996book}
\begin{align}
  \frac{d L}{ds} = \frac{\Omega}{2} L, 
\end{align} 
where $\Omega$ lies in the space of the generators of the proper Lorentz group (double cover) 
$\sigma^{\mu\nu} = \frac{1}{2}[\gamma^{\mu},\gamma^{\nu}]$, containing both Hermitian and anti-Hermitian elements.
The derivative of the proper velocity in terms of $\Omega$ reads
\begin{align}
  \gamma^{0} \frac{d}{ds} u \!\!\!/ = \left \langle \gamma^0 \Omega u \!\!\!/  \right\rangle_{H},
\end{align}
which leads to the conclusion that it is \emph{sufficient} to identify $\Omega$ with 
the electromagnetic field. This fact can be used to establish the following 
alternative dynamical equations \cite{hestenes1999new,baylis1999electrodynamics,Baylis1996book}
\begin{align}
 \label{sde1}
 \frac{d}{ds} X\!\!\!\!/ &= c L L^\dagger \gamma^0, \\
  \frac{d L}{ds} &= \frac{e F\!\!\!\!/}{2} L,
 \label{sde2}
\end{align}
which are sometimes easier to solve than the standard dynamical equations of motion in
terms of the proper velocity \cite{baylis1999electrodynamics,PhysRevA.60.785,PhysRevA.45.4293}.

The projector  $\mathcal{P}$ can be defined as
\begin{align}
  \mathcal{P} \equiv \frac{1}{4}(\boldsymbol{1} + \gamma^0 )(\boldsymbol{1} + i \gamma^1\gamma^2 ) =
  diagonal\{1,0,0,0\},  
\end{align}
such that the following properties are verified
\begin{align}
   \gamma^0\mathcal{P}  &=  \mathcal{P} ,\\
   i\gamma^1\gamma^2\mathcal{P} &=  \mathcal{P}.
\label{projector-properties}
\end{align}

Equations (\ref{sde1}) and (\ref{sde2}) can be multiplied by $\mathcal{P}$ 
on the right and the proper velocity is expressed through the classical 
column spinor $\Psi$ 

\begin{align}
   \frac{d X^{\mu}}{ds} &= \Psi^\dagger c \gamma^0\gamma^{\mu}\Psi ,\\
  \frac{d \Psi}{ds} &= \frac{eF\!\!\!\!/}{2} \Psi.
\end{align}

\section{Solutions of the Dirac Equation for a Free Particle}
\label{Dirac-free-particle}
The  Dirac equation for the quantum Lorentz rotor in the absence
of external interactions is \cite{hestenes1973local,doran2003geometric,Baylis1996book}
\begin{align}
   \hbar \partial\!\!\!/ \mathcal{L}  \gamma^2\gamma^1 -  m c \mathcal{L}\gamma^0 = 0. 
\end{align}
The solutions can be sought in the form $\mathcal{L} = B R$, in analogy with the classical  
Lorentz rotor. The boost is expressed in terms of the
momentum by the following formulas
\begin{align}
 B_{+} &= \,\,\sqrt{ \frac{p\!\!\!/\gamma^0 }{mc}  }\,\, = 
\frac{ p\!\!\!/ \gamma^0 + mc  }{ \sqrt{2mc( p^0 + m c  )} }\,\,\,\,\,\,\,\, ;p^0 > mc > 0  ,\\
 B_{-} &=  \sqrt{- \frac{p\!\!\!/\gamma^0 }{mc}  } = 
\frac{ p\!\!\!/ \gamma^0 - mc  }{ \sqrt{-2mc( p^0 - m c  )} } \,\,; p^0 < - mc, 
\end{align}
where the restrictions are taken to ensure that the matrices are Hermitian. 
The first solution of the Dirac equation takes the form
\begin{align}
 \mathcal{L}_{+\uparrow} 
 = \frac{ p\!\!\!/ \gamma^0 + mc  }{ \sqrt{2mc( p^0 + m c  )} } 
\exp{( p\!\!\!/  \cdot x \!\!\!/ \, \gamma_1 \gamma_2/\hbar)},
\end{align}
which is consistent with the identification of a particle with 
positive energy $p^0>mc$. 

In order to proceed further, we resort to specific representations of the gamma matrices. 
The two most common representations for the gamma matrices are the Dirac  
and Weyl representations. The former one is
\begin{align}
  \gamma^0   &= \sigma_3 \otimes \mathbf{1}, \\
  \gamma^{j}  &= i \sigma_2 \otimes \sigma_j; 
\end{align}
whereas, the latter reads
\begin{align}
 \gamma^0 &=    \sigma_1 \otimes \mathbf{1}, \\
 \gamma^{j}&=  i\sigma_2 \otimes \sigma_j.
\end{align}
The  gamma matrices obey the following Clifford algebra in the Minkowski space
\begin{align}
   \frac{1}{2}(\gamma^{\mu}\gamma^{\nu} +\gamma^{\mu}\gamma^{\nu}) = g^{\mu \nu}, 
\end{align}
such that
\begin{align}
 {\gamma^0}^\dagger &= \gamma^0 ,\\
 (\gamma^k)^\dagger &= - (\gamma^k)^\dagger, \\
 \gamma^{\mu} &= \gamma_{\mu}^{-1}.
\end{align}

The corresponding column spinor in the Dirac representation is  
\begin{align}
  \psi_{+\uparrow} = \frac{1}{\sqrt{2mc( mc + p^0)}} 
          \begin{pmatrix}
          p^0 + mc  \\
          0        \\
          p^3 \\
          p^1 + i p^2    
          \end{pmatrix} \exp(-i  p\!\!\!/  \cdot x \!\!\!/ /  \hbar) = 
 \begin{pmatrix}
          1 \\
          0        \\
          0 \\
          0    
          \end{pmatrix} \exp(-i m c^2 \tau /  \hbar)
,
\end{align}
where $\tau$ is the proper time. The exponential indicates a fast rotation in the 
plane $\gamma_2\gamma_1$  with the \emph{zitterbewegung} frequency $\omega = \frac{2mc^2}{\hbar}$ in the reference 
frame attached to the particle \cite{hestenes2003mysteries,baylis2007broglie}.  
The momentum is calculated as the eigenvalue of the momentum operator defined 
as
\begin{align}
   \hat{\mathfrak{p}} \, \cdot = \hbar \partial\, \cdot \gamma^2 \gamma^1,
\end{align}
while the quantum current is calculated as 
\begin{align}
  J\!\!\!/ = \mathcal{L} {\mathcal{L}}^\dagger  \gamma^0.   
\end{align}
In the present case, the momentum is related to the 
current according with the following formula
\begin{align}
  p\!\!\!/_{+\uparrow} = mc {J\!\!\!/}_{+\uparrow} \gamma^0
\end{align}

The second independent solution is constructed by introducing an additional  $\pi$ 
spatial rotation on a plane perpendicular to $\gamma_1\gamma_2$, effectively reversing the orientation 
of the \emph{zitterbewegung} rotation as
\begin{align}
 \mathcal{L}_{+\downarrow} = \sqrt{ \frac{p\!\!\!/\gamma^0 }{mc}  } 
\exp( -\frac{\pi}{2} \gamma^2 \gamma^3  )
\exp{( p\!\!\!/  \cdot x \!\!\!/ \, \gamma_1 \gamma_2/\hbar)}. 
\end{align}
The column spinor representation for this case is
\begin{align}
  \psi_{+\downarrow} = \frac{i}{\sqrt{2mc( p^0 + mc)}} 
          \begin{pmatrix}
          0 \\
          p^0 + mc \\
          p^1 - i p^2\\
          -p^3    
          \end{pmatrix} \exp(-i  p\!\!\!/  \cdot x \!\!\!/  /\hbar ) =
 \begin{pmatrix}
          0 \\
          i        \\
          0 \\
          0    
          \end{pmatrix} \exp(-i m c^2 \tau /  \hbar).
\end{align}
The current and momentum for this second solution are exactly the same as for 
the first solution.

Additional independent solutions can be constructed by exploring the 
Hodge dual space through the multiplication of the volume element
  $  \gamma^0\gamma^1\gamma^2\gamma^3$ as  
\begin{align}
   \mathcal{L}_{-\uparrow} = \gamma^0\gamma^1\gamma^2\gamma^3
 \sqrt{- \frac{p\!\!\!/\gamma^0 }{mc}  } \exp{( p\!\!\!/  \cdot x \!\!\!/ \, \gamma^1 \gamma^2/\hbar)}. 
\end{align}   
This solution is consistent with a particle with negative energy $p^0<-mc$, 
as necessary requirement to satisfy the Dirac equation. The corresponding
column spinor representation is
\begin{align}
  \psi_{-\uparrow} = -\frac{i}{\sqrt{2mc( p^0 -m c  )}} 
          \begin{pmatrix}
          p^3 \\
          p^1 + i p^2\\
          p^0 - mc \\
          0      
          \end{pmatrix} \exp(-i  p\!\!\!/  \cdot x \!\!\!/  /\hbar ) = 
 \begin{pmatrix}
          0 \\
          0        \\
          i \\
          0    
          \end{pmatrix} \exp(i m c^2 \tau /  \hbar).
\end{align}
The current and momentum are calculated as
\begin{align}
   {J\!\!\!/}_{-\uparrow}  &= - \frac{p\!\!\!/\gamma^0 }{mc}, \\  
   {p\!\!\!/}_{-\uparrow}  &=  p\!\!\!/,
\end{align}
which indicates that they are anti-parallel. A serious difficulty is 
recognized by observing the  the zero component of the 
proper velocity
\begin{align}
  u^0 = c \frac{ dt}{d\tau}  = \frac{p^0}{mc} < 0, 
\end{align}
which is consistent with a particle moving to the past. This situation is
difficult or impossible to fit within classical mechanics and is inconsistent 
with the formulation of the classical spinor, which is explicitly defined with 
a positive zero component for the proper velocity. 

The fourth independent solution is found by applying an additional 
rotation on the previous one
\begin{align}
 \mathcal{L}_{-\downarrow} = \gamma^0\gamma^1\gamma^2\gamma^3\sqrt{ \frac{p\!\!\!/\gamma^0 }{-mc}  } 
\exp( -\frac{\pi}{2} \gamma^2 \gamma^3  )
\exp{( p\!\!\!/  \cdot x \!\!\!/ \, \gamma^1 \gamma^2/\hbar)}, 
\end{align}
with $p^0<-mc$. This solution suffers from the same problems as $\mathcal{L}_{-\downarrow}$.  The 
corresponding column spinor representation reads
\begin{align}
  \psi_{-\downarrow} = \frac{i}{\sqrt{2mc( p^0 -m c  )}} 
          \begin{pmatrix}
          p^1 - i p^2\\
          -p^3 \\
          0 \\
          p^0 - mc    
          \end{pmatrix} \exp(-i  p\!\!\!/  \cdot x \!\!\!/  /\hbar )
= 
 \begin{pmatrix}
          0 \\
          0        \\
          0 \\
          -1    
          \end{pmatrix} \exp(i m c^2 \tau /  \hbar).
\end{align}

\section{ODM: Quantum/Classical Spinless Salpeter Equations}
\label{SalpeterSec}
The general case for a spin zero particle is covered in the next section.  
The purpose of this section is to briefly introduce ODM and for this objective we 
present a simple illustrative example concerning the model for a spinless relativistic particle 
in one spatial dimension plus time (1+1), which in the quantum case leads to the spineless Salpeter equation.

Non-covariant relativistic classical mechanics can be expressed in terms of Newton's equation
\begin{align}
	\frac{dP}{dt} &= -U'(X),
\end{align}
with the momentum defined as  
\begin{align}
 \\
            p &= \frac{m  \frac{dX}{dt}  }{\sqrt{ 1 - \left( \frac{dX}{dt}/c\right)^2 }},
\end{align}
where $X$ are $P$ are the particle's coordinate and momentum. 
It can be shown that $  \frac{dX}{dt}  = \frac{P}{\sqrt{m^2 + (P/c)^2}}$, thus, the system of equations describing dynamics is
\begin{align}
		\frac{dP}{dt} &= -U'(X) , \\
 \frac{dX}{dt} &=  V'(P);  \qquad V(P) = \sqrt{ (cP)^2 + c^4 m^2 }.
\end{align}
According to ODM \cite{bondar2011hilbert} we can arrive to the same differential equations 
from a different set of postulates. 
The  \emph{first ODM element} we postulate is given by the law that governs the evolution of 
average values of an ensemble of independent particles
\begin{align}
 \label{firstODM-1}
		\frac{d\overline{P}}{dt} &= -\overline{U'(X)}  ,\\
 \frac{d\overline{X}}{dt} &=  \overline{V'(P)}.
\label{firstODM-2}
\end{align}
It must be emphasized that time-evolution for the averages are consistent 
for more than a specific kinematical description, i. e., classical kinematics.
In fact  the average values  do not even require the notion of 
localized trajectories and can be perfectly valid for quantum mechanics as well.

The \emph{second ODM element} involves the precise definition of the averaging, which
can be set in terms of linear operators and wavefunctions in the Hilbert space  
according to the postulates: i) The states of a system
are represented by normalized vectors $|\psi(t)\rangle$ of a complex Hilbert space, and the observables are given by
self-adjoint operators acting on this space;
 ii) The expectation value of a measurable $A$ at time t is $A(t) =  \langle \psi(t) | A | \psi(t) \rangle  $;
 iii) The probability that a measurement of an observable $\hat{A}$ at time t yields the value 
A is $| \langle A | \psi(t) \rangle  |^2$ where $ \hat{A} |A\rangle = A |A \rangle  $; iv) The state space of a composite system 
is the tensor product of the subsystems’ state spaces. These axioms are well-known in
quantum mechanics but it must be clear that taken alone, these axioms do not directly nor indirectly imply 
any restriction to exclusively describe quantum mechanics.
 Hence, the equations describing the averages in Eqs. (\ref{firstODM-1}-\ref{firstODM-2}) can be written 
in terms of a scalar wavefunction and the respective observable operators as
	\begin{align}
    \label{Ehrenfest-eqs-1}
		\frac{d}{dt} \bra{ \psi(t)} \hat{p} \ket{\psi(t)} &= \bra{\psi(t)} - U'(\hat{x}) \ket{\psi(t)}, \\
		\frac{d}{dt} \bra{ \psi(t)} \hat{x} \ket{\psi(t)} &= \bra{\psi(t)} V'(\hat{p}) \ket{\psi(t)},
   \label{Ehrenfest-eqs-2}
	\end{align} 
which resemble the Ehrenfest theorem.
The generator of motion $\hat{H}$, as a Hermitian operator, can be introduced through Stone's 
theorem as
	\begin{align}
		i\hbar \ket{d\psi(t)/dt} = \hat{H} \ket{\psi(t)}, 
	\end{align}
such that the time derivatives in the equations of expectation values (\ref{Ehrenfest-eqs-1}-\ref{Ehrenfest-eqs-2}) 
are  reduced to commutators
\begin{align}
  \bra{\psi(t)}  \frac{i}{\hbar} \commut{ \hat{H}, \hat{p} } \ket{\psi(t)}
 &=  \bra{\psi(t)} - U'(\hat{x}) \ket{\psi(t)} , \\
  \bra{\psi(t)}  \frac{i}{\hbar} \commut{ \hat{H}, \hat{x} } \ket{\psi(t)}
 &=  \bra{\psi(t)}  V'(\hat{p}) \ket{\psi(t)}.  
\end{align}
The expectation values can be dropped under general conditions because the identities must be
valid for all possible  initial states 
\begin{align}
\label{commutator-eqs-1}
    \frac{i}{\hbar} \commut{ \hat{H}, \hat{p} } 
 &=   - U'(\hat{x}),   \\
\label{commutator-eqs-2}
    \frac{i}{\hbar} \commut{ \hat{H}, \hat{x} } 
 &=   V'(\hat{p}) .  
\end{align}
The \emph{third ODM element} requires the specification of the algebra, 
which in the quantum case is achieved by demanding the canonical commutation 
relation between momentum and position
 \begin{align}
\commut{\hat{x}, \hat{p}} = i\hbar.
\end{align}
According to  Sec. \ref{Sec_Weyl_calculus}, $\commut{\hat{x}, \hat{p}} = i\hbar$  can be used to write 
Eqs. (\ref{commutator-eqs-1}-\ref{commutator-eqs-2}) as partial 
differential equations 
\begin{align}
   \frac{ \partial \hat{H}}{\partial \hat{x} } 
 &=   - U'(\hat{x})  ,\\
    \frac{ \partial \hat{H} }{ \partial \hat{p} } 
 &=   V'(\hat{p}) ,  
\end{align}
with a solution that corresponds to the quantum Salpeter Hamiltonian  
	\begin{align}
		\hat{H} = V(\hat{p}) + U(\hat{x}) = \sqrt{ (c\hat{p})^2 + c^4 m^2 } + U(\hat{x}),
	\end{align}
thus, the sought spinless Salpeter equation for the scalar wavefunction is 
\begin{align}
 \label{s-Salpeter}
   i \hbar \frac{\partial}{\partial t}\psi =  (\sqrt{ (c\hat{p})^2 + c^4 m^2 } + U(\hat{x}) )\psi
\end{align}

Having treated the quantum mechanical case, ODM can be used to find the corresponding 
classical description consistent with the evolution of averages (\ref{firstODM-1}-\ref{firstODM-2}) leading to  
the same Ehrenfest theorems  (\ref{Ehrenfest-eqs-1}-\ref{Ehrenfest-eqs-2}) but with 
a different labeling of the operators 
	\begin{align}
    \label{class-Ehrenfest-eqs-1}
		\frac{d}{dt} \bra{ \psi(t)} \hat{P} \ket{\psi(t)} &= \bra{\psi(t)} - U'(\hat{X}) \ket{\psi(t)} , \\
		\frac{d}{dt} \bra{ \psi(t)} \hat{X} \ket{\psi(t)} &= \bra{\psi(t)} V'(\hat{P}) \ket{\psi(t)}.
   \label{class-Ehrenfest-eqs-2}
	\end{align}  
In analogy with the quantum case, the generator of motion $\hat{K}$ is introduced 
by Stone's theorem 
\begin{align}
		i \ket{d\psi(t)/dt} = \hat{K} \ket{\psi(t)}. 
\end{align}
The time derivatives are expressed in terms of the generator of motion leading to
the following equations
\begin{align}
  \bra{\psi(t)}  i \commut{ \hat{H}, \hat{P} } \ket{\psi(t)}
 &=  \bra{\psi(t)} - U'(\hat{X}) \ket{\psi(t)} , \\
  \bra{\psi(t)}  i \commut{ \hat{H}, \hat{X} } \ket{\psi(t)}
 &=  \bra{\psi(t)}  V'(\hat{P}) \ket{\psi(t)}.  
\end{align}
The expectation values can be dropped according to the same arguments  
applied to the quantum case
\begin{align}
\label{class-commutator-eqs-1}
    i \commut{ \hat{K}, \hat{P} } 
 &=   - U'(\hat{X})  , \\
\label{class-commutator-eqs-2}
    i \commut{ \hat{K}, \hat{X} } 
 &=   V'(\hat{P}) .  
\end{align}
The \emph{third ODM element} complies with  classical mechanics by 
demanding the commutativity of the position and momentum operators
\begin{align}
 [ \hat{X} , \hat{P} ] = 0.
\end{align} 
However, assuming a generator of motion only depending
on the position and momentum operators  $K = K(\hat{X},\hat{P})$ leads to a contradiction 
with Eqs. (\ref{class-commutator-eqs-1}-\ref{class-commutator-eqs-2}) 
because it implies that all commutators on the left side are identically 
equal to zero. 
This problem can be avoided by extending the algebra with two additional 
operators $\hat{\Lambda}$ and $\hat{\Theta}$ obeying the algebra
\begin{align}
 {[} \hat{X}, \hat{\Lambda}  {]} &= i, \\
 {[} \hat{P}, \hat{\Theta}  {]} &= i, \\
  {[} \hat{\Lambda}, \hat{\Theta}  {]} &= 0.
\end{align}
According to Sec. \ref{Sec_Weyl_calculus}, this algebra 
of classical operators is used to write  Eqs. (\ref{class-commutator-eqs-1}-\ref{class-commutator-eqs-2})
in terms of partial derivatives as
\begin{align}
     \frac{ \partial \hat{K}}{ \partial \hat{\Theta} } 
 &=   - U'(\hat{X}),   \\
     \frac{ \partial \hat{K} }{\partial \hat{\Lambda} } 
 &=   V'(\hat{P}),  
\end{align}
with the following solution
\begin{align}
 \hat{K} = V'(\hat{P}) \hat{\Lambda} - U'(\hat{X}) \hat{\Theta} + f(\hat{X},\hat{P}  ), 
\end{align}
where $ f(\hat{X},\hat{P}  )$ is an arbitrary function of the position and momentum.
An explicit representation of the classical algebra is constructed in terms of differential
operators in the phase space
\begin{align}
   \hat{X} &= X,\\
   \hat{P} &= P,\\
   \hat{\Lambda} &= -i \frac{\partial}{\partial X}, \\
   \hat{\Theta} &= -i \frac{\partial}{\partial P}.
\end{align}
Thus, the classical wavefunction's $\Psi$ equation of motion is
\begin{align}
 i \frac{\partial }{\partial t} \Psi = 
 \left( -i V'(P)   \frac{\partial}{\partial X} +i  U'(X)  \frac{\partial}{\partial P}  + f(\hat{X},\hat{P}  )   \right)\Psi . 
\label{classical-Salpeter} 
\end{align}
Furthermore, the density is defined as $\rho = \psi \psi^*$, leading to the following equation of motion
\begin{align}
 \frac{\partial}{ \partial t } \rho =
  \left( - V'(P)   \frac{\partial}{\partial X} +  U'(X)  \frac{\partial}{\partial P}      \right)\rho ,  
\end{align}
which can be written in terms of Poisson brackets for a classical Hamiltonian $H_{c} = V(P) +U(X)$ as
\begin{align}
 \frac{\partial }{\partial t} \rho = \Poisson{H_c,\rho}
\end{align} 
In summary, we have obtained the quantum spinless Salpeter equation (\ref{s-Salpeter}) and the 
corresponding classical counterpart (\ref{classical-Salpeter}), wherein the the state is represented 
by quantum/classical wavefunctions in the Hilbert space. The solution of the 
classical dynamical equation (\ref{classical-Salpeter}) is a complex scalar wavefunction but its 
density can be interpreted in terms of standard classical mechanics. Therefore, 
 a formulation in the Hilbert space is not a necessary condition to ascertain quantumness.

\section{ODM: Spinless Quantum/classical Relativistic Mechanics}
\label{Klein-Gordon-section}
The previous section considered  a model for one dimensional relativistic quantum/classical mechanics 
but mostly concerned to illustrate  ODM than to provide a proper description of the topic.
 In this section we  provide a more rigurous treatment of relativistic mechanics in 
manifestly covariant 
form with electromagnetic field interaction described by
\begin{align}
m \frac{d u_{\mu}}{ds} &= e F_{\mu\nu} u^{\nu},
\end{align}
where the proper velocity components $u_{\mu}$ are defined in terms of the derivative of the space-time 
coordinates $X^\mu$ 
\begin{align}
  u^{\mu}  =   \frac{d X^{\mu}}{ds}.
\end{align}
An equivalent formulation can be developed in terms of the position-momentum 
canonical coordinates (Sec. \ref{standard-classical-dynamics})
\begin{align}
 \label{proper-velocity-4}
 \frac{d X^{\mu}}{ds} &= \frac{P^{\mu}-eA^{\mu}}{m} ,\\
\frac{d P_{\mu}}{ds} &=  \frac{e}{m} (\partial_{\mu} A_{\nu})( P^{\nu} -e A^{\nu}  ).
\label{canonical-force-4}
\end{align}
The components $X^{\mu}$ and $P_{\mu}$ define the \emph{extended phase space}, which is the natural 
generalization of the phase space in non-relativistic mechanics excluding $X^0$ and $P_0$. 
The classical equations of motion (\ref{proper-velocity-4}) 
and (\ref{canonical-force-4}) can be used as the
 \emph{first ODM element} such that  
\begin{align}
\label{first-ODM-KG-1}
 \frac{d \overline{X^{\mu}} }{ds} &= \overline{ \frac{P^{\mu}-eA^{\mu}}{m} } , \\
\frac{d \overline{P_{\mu}}}{ds} &=  \frac{e}{m} \overline{ (\partial_{\mu} A_{\nu})( P^{\nu} -e A^{\nu}  ) }.
\label{first-ODM-KG-2}
\end{align}
The \emph{second ODM element} is invoked and the average values are expressed
in terms of a scalar classical wavefunction $\ket{\Psi}$ in the Hilbert space and the corresponding
observable operators in the form of Ehrenfest theorems as
\begin{align}
 \frac{d \langle \Psi | \hat{X}^{\mu} | \Psi \rangle }{ds} &= 
 \langle \Psi | \frac{\hat{P}^{\mu}-e \hat{A}^{\mu}}{m} | \Psi \rangle  , \\
 \frac{ d \langle \Psi | \hat{P}_{\mu} | \Psi \rangle }{ds} &= 
 \langle \Psi |
 \frac{1}{2}\Acommut{ 
  \frac{e}{m} \partial_{\mu} \hat{A}_{\nu}
,
   \hat{P}^{\nu} -e\hat{A}^{\nu}
}
| \Psi \rangle,
\end{align}
where the anticommutator $\{\cdot,\cdot\}$ was applied in order to enforce Hermiticity.
Stones's theorem introduces the generator of motion $\boldsymbol{\mathcal{K}}$
\begin{align}
  \frac{d | \Psi \rangle}{ds}  = i \boldsymbol{\mathcal{K}} |\Psi \rangle,
\end{align}
which is utilized to obtain the following system of equations
\begin{align}
i[ \hat{X}^{\mu} , G ] &= \frac{\hat{P}^{\mu}-e\hat{A}^{\mu}}{m}  , \\
 i[ \hat{P}_{\mu} , G] &=\frac{e}{2m}
  \acommut{
  \partial_{\mu} \hat{A}_{\nu}
  ,
  \hat{P}^{\nu}-e\hat{A}^{\nu}
  }.
\end{align}
\emph{Third ODM element}. The generator of motion is assumed to depend on classical 
operators  $\boldsymbol{\mathcal{K}} =\boldsymbol{\mathcal{K}}( \hat{X} , \hat{P} , \hat{\Lambda} , \hat{\Theta}  )  $ obeying the following classical algebra 
\begin{align}
 \label{commutation-classical-1}
 [ \hat{X}^{\mu} , \hat{P}_{\nu} ] &= 0 , \\
  \label{commutation-classical-2}
 {[} \hat{X}^{\mu} , \hat{\Lambda}_{\nu}  {]} &= -i \delta^{\mu}_{\,\,\,\,\nu} , \\
  \label{commutation-classical-3}
 {[}\hat{P}_{\mu} , \hat{\Theta}^{\nu}  {]} &= -i \delta^{\nu}_{\,\,\,\,\mu} , \\
 {[}\hat{\Lambda}_{\mu} , \hat{\Theta}^{\nu}{]} &= 0.  
 \label{commutation-classical-4}
\end{align}
Applying the properties of the classical algebra according to 
Sec. \ref{Sec_Weyl_calculus},
the generator of motion must satisfy the equations
\begin{align}
  \frac{\partial \boldsymbol{\mathcal{K}}  }{\partial \Lambda_{\mu}} 
&= \frac{\hat{P}^{\mu}-e\hat{A}^{\mu}}{m} , \\
   \frac{\partial \boldsymbol{\mathcal{K}} }{\partial \Theta^{\mu}}     
&=\frac{1}{2m}\Acommut{ 
 e\partial_{\mu} \hat{A}_{\nu} ,
  \hat{P}^{\nu}-e\hat{A}^{\nu}
},
\end{align}
whose solution is the relativistic Koopman-von Neumann generator
\begin{align}
\boldsymbol{\mathcal{K}} = \frac{1}{2m}
\Acommut{
\hat{P}^{\mu}-e\hat{A}^{\mu}
,
 \hat{\Lambda}_{\mu} + e \partial_{\nu} \hat{A}_{\mu} \hat{\Theta}^{\nu}  
}
+ f(\hat{X},\hat{P}), \label{classicalK}
\end{align}
with $f$ as an arbitrary function of the classical operators $\hat{X}^{\mu}$
and $\hat{P}_{\mu}$. The anticommutator can be expanded and upon applying the commutation 
relations (\ref{commutation-classical-1}-\ref{commutation-classical-4}) leads to
\begin{align}
   K = \frac{1}{m}
 ( \hat{P}^{\mu} - e\hat{A}^{\mu})(  \hat{\Lambda}_{\mu} + e \partial_{\nu} \hat{A}_{\mu} \hat{\Theta}^{\nu} )
 + i \frac{e}{m} \partial_{\mu}A^{\mu} + f(\hat{X},\hat{P})
,
\end{align}
where  $ f(\hat{X},\hat{P})$ can be chosen to cancel $ i \frac{e}{m} \partial_{\mu}A^{\mu} $. 
The relativistic Koopman-von Neumann equation 
can now be stated in terms of the  extended phase-space representation  (\ref{XP-representation})  
as
\begin{align}
 \left[
 \frac{i}{m}
 ( P^{\nu} - e A^{\nu}  )
 \left( \frac{\partial}{\partial X^{\nu}} + e \partial_{\alpha} A_{\nu} \frac{\partial}{\partial P_{\alpha}}   \right) 
+  f(X,P)
 \right]
\Psi = 0.
\label{rel-KN}
\end{align}
Moreover, demanding $f(X,P)$ to be real, the density $\rho=\Psi\Psi^*$ is seen to
satisfy the relativistic Liouville equation
\begin{align}
 \frac{1}{m}
 ( P^{\nu} - e A^{\nu}  )
 \left( \frac{\partial}{\partial X^{\nu}} + e \partial_{\alpha} A_{\nu} \frac{\partial}{\partial P_{\alpha}}   \right) 
\rho = 0.
\label{rel-rel-liouville}
\end{align}

Having expressed classical dynamics in the extended  phase space, there is still 
another possibility to be explored: classical mechanics in the 
configuration-tangent space wherein the velocity replaces the momentum 
as state variable. The relativistic dynamical equations expressed in terms of the 
position and velocity are employed as the \emph{first ODM element} 
\begin{align}
  \frac{d \overline{X^{\mu}}}{ds}  &= \overline{ u^{\mu} } , \\
m \frac{d \overline{ u_{\mu} }}{ds} &= e \overline{  F_{\mu\nu} u^{\nu} }. 
\end{align} 
In analogy with the ODM procedure followed in Sec. \ref{SalpeterSec}, the generator of motion $G$ can be introduced through Stone's theorem such that
\begin{align}
  i[ \hat{X}^{\mu}   , G ] &= \hat{u}^{\mu} , \\
 i[ \hat{u}_{\mu}   , G ] &= e F_{\mu \nu} \hat{u}^{\nu}. 
\end{align}
Assuming the dependence  $G = G(\hat{X}^{\mu}, \hat{u}_{\mu}, \hat{\Lambda}_{\mu},
  \hat{\Theta}^{\mu} )$, with the following underlying algebra
\begin{align}
 [ \hat{X}^{\mu} , \hat{u}_{\nu} ] &= 0 , \\
 {[} \hat{X}^{\mu} , \hat{\Lambda}_{\nu}  {]} &= -i \delta^{\mu}_{\,\,\,\,\nu} , \\
 {[}\hat{u}_{\mu} , \hat{\Theta}^{\nu}  {]} &= -i \delta^{\nu}_{\,\,\,\,\mu} , \\
 {[}\hat{\Lambda}_{\mu} , \hat{\Theta}^{\nu}{]} &= 0,  
 \label{commutation-classical-u}
\end{align}
the generator of motion satisfy
\begin{align}
\frac{\partial G }{ \partial \hat{\Lambda}_{\mu}}   &= \hat{u}^{\mu} , \\
  \frac{\partial G }{ \partial \hat{\Theta}^{\mu}}  &= e F_{\mu \nu} \hat{u}^{\nu},
\end{align}
with the final solution
\begin{equation}
  G = \hat{u}^{\mu} \hat{\Lambda}_{\mu}  
  + e F_{\mu \nu} \hat{u}^{\nu} \hat{\Theta}^{\mu} + 
  f(\hat{x},\hat{u}).
\label{preVlasov}
\end{equation}
This generator can be used to define the equation of motion for a scalar 
wavefunction $\Psi$. The corresponding scalar density can be defined as $\rho = \Psi\Psi^*$,
leading to 
\begin{align}
  u^{\mu } \partial_{\mu} \rho  + e F_{\mu \nu} u^{\nu} \frac{\partial \rho }{ \partial u_{\mu} } = 0,
\end{align}
which is the relativistic Vlasov equation  \cite{Liboff2003} upon 
imposing a coupling with the Maxwell equations in the context of plasma physics.     

Having achieved a consistent classical formulation in terms of a classical scalar wavefunction, we 
can now concentrate on the respective quantum formulation for a particle without spin.
The first and second ODM elements are  exactly the same as in the classical case  leading to the same expectation values
\begin{align}
 \frac{d \langle \psi | \hat{x}^{\mu} | \psi \rangle }{ds} &= 
 \langle \psi | \frac{\hat{p}^{\mu}-e \hat{A}^{\mu}}{m} | \psi \rangle  , \\
 \frac{ d \langle \psi | \hat{p}_{\mu} | \psi \rangle }{ds} &= 
 \langle \psi |
 \frac{1}{2}\Acommut{ 
  \frac{e}{m} \partial_{\mu} \hat{A}_{\nu}
,
   \hat{p}^{\nu} -e\hat{A}^{\nu}
}
| \psi \rangle,
\end{align}
The first difference appears at the time of introducing the generator of motion  $\mathcal{K}$ through
Stones's theorem 
\begin{align}
 \hbar \frac{d | \psi \rangle}{ds}  = i \mathcal{K} |\psi \rangle,
\end{align}
where the Planck contant is employed to obtain the following
system of equations
\begin{align}
 \frac{i}{\hbar}[ \hat{x}^{\mu} , \mathcal{K} ] &= \frac{\hat{p}^{\mu}-e\hat{A}^{\mu}}{m}  , \\
 \frac{i}{\hbar}[ \hat{p}_{\mu} , \mathcal{K} ] &= 
 \frac{1}{2m}\Acommut{
 e\partial_{\mu} A_{\nu}
 ,
 \hat{p}^{\nu}-e \hat{A}^{\nu}
 }
. 
\label{commG}
\end{align}
In the quantum case, it is enough to assume the dependence on position and 
momentum operators
 $ \mathcal{K} = \mathcal{K}(\hat{x}^{\mu},\hat{p}_{\mu})$, obeying the quantum commutation relations as \emph{third ODM element}
\begin{align}
 [ \hat{x}^{\mu} , \hat{p}_{\nu} ] = - i \hbar \delta^{\mu}_{\,\,\,\,\nu}.
\end{align}
The system of differential equations for the quantum generator is
\begin{align}
   \frac{\partial \mathcal{K}}{ \partial \hat{p}_{\mu}  }  &= \frac{\hat{p}^{\mu} - e \hat{A}^{\mu}}{m}\\
  \frac{\partial \mathcal{K}}{ \partial \hat{x}^{\mu}} &=
   -\frac{1}{2m}\Acommut{
   e\partial_{\mu} \hat{A}_{\nu}
  ,
   \hat{p}^{\nu}- e \hat{A}^{\nu}
  } .
\label{partialG}
\end{align}
The solution of this equation with the integration constant $\frac{mc^2}{2}$ is the 
Klein-Gordon generator with electromagnetic interaction
\begin{align}
  \mathcal{K} =  \frac{1}{2m}(\hat{p}^{\mu}-e\hat{A}^{\mu})(\hat{p}_{\mu}-e\hat{A}_{\mu}) - \frac{mc^2}{2}, 
\label{GK-generator}
\end{align}
which has the same form as the classical Hamiltonian (see Sec. \ref{standard-classical-dynamics}) with operators 
instead of scalar variables.

\section{ODM: The Pauli Equation}
\label{PauliEqSec}
The purpose of this section is to develop non-relativistic spinorial  mechanics through ODM.
The spinorial quantum description is given by the Pauli equation, while the classical one does 
not exist.   
The following fundamental averages' evolution of the position and momentum  
was postulated in \cite{bondar2011hilbert} as \emph{first ODM element} 
\begin{align}
\label{basic-averages-1}
    \frac{d}{dt} \overline{{X}_k } 
&= \frac{1}{m} \overline{  {P}_k  } , 
\\
    \frac{d}{dt} \overline{ {P}_k } 
&=         - \overline{\partial_k U({\bf X})}, 
\label{basic-averages-2}
\end{align}
and used to deduce both quantum and classical mechanics upon specifying the underlying algebra.
The present situation involves magnetic fields and spin degrees 
of freedom $S_j$. The interaction with the magnetic field is naturally prescribed by the 
minimal coupling $P_k \rightarrow  P_k - e A_k $. The averages' evolution of the spin degrees 
of freedom must be compatible with the time-evolution of a magnetic dipole, so we postulate
\begin{align}
   \frac{d}{dt} \overline{ {S}_i }  &= -\frac{e}{m} \sum_{j,k=1}^3 \epsilon_{ijk} \overline{ {S}_k  B_j ({\bf x}) }.
\end{align}
Considering the outcome from the Stern-Gerlach experiment, we postulate a
 spin-orbit coupling given by $ \frac{e}{m} \sum_{j=1}^3 \overline{{S}_j \partial_k B_j({\bf x})} $. Placing all the elements together,
the fundamental law for the averages as \emph{first ODM element} read
\begin{align}
    \frac{d}{dt} \overline{{X}_k } 
&= \frac{1}{m} \overline{ \left[ {P}_k - e A_k({\bf X}) \right] } ,
\\
    \frac{d}{dt} \overline{ {P}_k } 
&= \sum_{j=1}^3 \frac{e}{m} \overline{ \left[ {P}_j - e A_j({\bf X}) \right] \partial_k A_j ({\bf X})}
            - e \overline{\partial_k V({\bf X})} +
 \frac{e}{m} \sum_{j=1}^3 \overline{{S}_j \partial_k B_j({\bf X})} , \\
    \frac{d}{dt} \overline{ {S}_i } 
&= -\frac{e}{m} \sum_{j,k=1}^3 \epsilon_{ijk} \overline{ {S}_k  B_j ({\bf X}) }. 
\end{align}

\emph{Second ODM element}. The explicit definition of the averages, can be expressed in terms
 of expectation values of a wave function in the spinorial Hilbert space $\ket{\Psi(t)}$, 
with the variables replaced by their corresponding operators. 
Hereinafter, the dependencies on $\hat{\bf x}$ and $t$ will be omitted for the sake of simplicity 
such that
\begin{align}
    \frac{d}{dt} \langle \Psi | \hat{x}_k  | \Psi \rangle
&= \frac{1}{m}  \langle \Psi | \hat{p}_k - e \hat{A}_k | \Psi \rangle , \\
     \frac{d}{dt}  \langle \Psi | \hat{p}_k   | \Psi \rangle
&=  \langle \Psi | \sum_{j=1}^3 \frac{e}{m}  \left[ \hat{p}_j - e \hat{A}_j \right] \partial_k \hat{A}_j
            - e \partial_k V   + \frac{e}{m} \sum_{j=1}^3 \hat{S}_j \partial_k B_j | \Psi \rangle , \\
    \frac{d}{dt}   \langle \Psi | \hat{S}_i  | \Psi \rangle 
&= - \langle \Psi | \frac{e}{m} \sum_{j,k=1}^3 \epsilon_{ijk}  \hat{S}_k  B_j| \Psi \rangle  . 
\end{align}
Introducing the Hamiltonian as  $i\hbar \ket{d\Psi(t)/dt} = \hat{H} \ket{\Psi(t)}$, and dropping the expectation 
values, we obtain the following commutator equations
\begin{align}
\label{PauliCommutatorEqs}
    \frac{i}{\hbar} \commut{ \hat{H}, \hat{x}_k } &= 
  \frac{1}{m} \left( \hat{p}_k - e\hat{A}_k \right) , \nonumber\\
    \frac{i}{\hbar} \commut{ \hat{H}, \hat{p}_k } &= 
  \frac{e}{m} \sum_{j=1}^3 \left( \hat{p}_j - e \hat{A}_j \right) \partial_k \hat{A}_j - e \partial_k V +
   \frac{e}{m} \sum_{j=1}^3 \hat{S}_j \partial_k B_j , \\
    \frac{i}{\hbar} \commut{ \hat{H}, \hat{S}_i }  &= -\frac{e}{m} \sum_{j,k=1}^3 \epsilon_{ijk} B_j \hat{S}_k \nonumber.
\end{align}

The {\emph{third ODM element}} is introduced through the 
following quantum commutation relations
\begin{align}\label{PauliEqAlgebra}
    \commut{\hat{x}_k, \hat{p}_j} = i\hbar\delta_{kj}, \qquad \commut{ \hat{S}_i, \hat{S}_j } = i\hbar \sum_{k=1}^3 \epsilon_{ijk} \hat{S}_k,
\end{align}
where the spin is set to obey the standard commutation relationship for the angular momentum.
In order to proceed further, the generator of dynamics $\hat{H}$ is sought in the following form
\begin{align}\label{AnzatsForPauliEq}
    \hat{H} = F_0 (\hat{\bf x}, \hat{\bf p}) + \sum_{j=1}^3 F_j (\hat{\bf x}, \hat{\bf p}) \hat{S}_j.
\end{align}
Upon upon imposing the quantum commutation relations 
and equating the coefficients for each spin component, 
Eq. (\ref{PauliCommutatorEqs}) result in
\begin{align}
    \frac{\partial F_0}{\partial p_k} &= \frac{1}{m} \left( p_k - e \hat{A}_k \right), 
   \\ \frac{\partial F_j}{\partial p_k} &= 0 , \\
    \frac{\partial F_0}{\partial x_k} &=
   -\frac{e}{m} \sum_{j=1}^3 \left( \hat{p}_j - e \hat{A}_j \right) \partial_k \hat{A}_j + e \partial_k V , \\
    \frac{\partial F_j}{\partial x_k} &=  -\frac{e}{m}  \partial_k B_j , \\
   F_j &=  -\frac{e}{m} B_j. 
\end{align}
This system of differential equation can be solved to finally obtain the 
quantum Pauli Hamiltonian 
\begin{align}
    \hat{H} =  \frac{1}{2m} \sum_{k=1}^3 \left( \hat{p}_k -  e \hat{A}_k \right)^2 + 
 e V - \frac{e}{m} \sum_{j=1}^3 \hat{S}_j B_j.
\end{align}

The corresponding classical 
description in the spinorial Hilbert space can be attempted by replacing the quantum 
algebra by a consistent classical algebra where the position and momentum operators 
commute $\commut{ \hat{X}_k, \hat{P}_j } = 0$. This classical algebra must be extended with 
auxiliary operators $ \hat{\Lambda}_{j}$, $ \hat{\Theta}_{j} $ for the same reasons explained in the previous sections.  
In the present case, the classical algebra can be chosen as
\begin{align}
\label{ClassicalPauliAlgebra}
    \commut{ \hat{X}_k, \hat{P}_j } &= 0, \qquad \commut{ \hat{X}_k, \hat{\Lambda}_{l} } = i\delta_{k,l},
\\  \commut{ \hat{P}_k, \hat{\Theta}_{l} } &= i\delta_{k,l},
    \qquad \commut{ \hat{S}_i, \hat{S}_j } = i\hbar \sum_{k=1}^3 \epsilon_{ijk} \hat{S}_k.
\end{align}
Substituting the Ansatz
\begin{align}
    \hat{H} = G_0 (\hat{\bf X}, \hat{\bf P}, \hat{\bf \Lambda},  \hat{\bf \Theta} )
 + \sum_{j=1}^3 G_j (\hat{\bf X}, \hat{\bf P}, \hat{\bf \Theta},  \hat{\bf \Lambda}) \hat{S}_j
\end{align} 
into Eq. (\ref{PauliCommutatorEqs}), one obtains a system of partial 
differential equations 
\begin{align}
    \frac{\partial G_0}{\partial \Lambda_{k}} &= \frac{\hbar}{m} \left( P_k - e\hat{A}_k \right) , \\
    \frac{\partial G_j}{\partial \Lambda_{k}} &= 0 , \\
    \frac{\partial G_0}{\partial \Theta_{k}} &= \frac{e\hbar}{m} \sum_{j=1}^3 
  \left( \hat{P}_j - e \hat{A}_j \right) \partial_k \hat{A}_j - e\hbar \partial_k V,  \notag\\
     \frac{\partial G_j}{\partial \Theta_{k}} &=  \frac{e\hbar}{m} \partial_k B_j \\
     G_j &= -\frac{e}{m} B_j.
\end{align}
The last two equalities are in general not compatible unless the magnetic 
field is constant. This type of magnetic field is widely used
to model specific systems but any real magnetic field cannot be truly 
constant because Maxwell's equations demand the magnetic field lines 
to form closed curves. In other words, the commutator
equations (\ref{PauliCommutatorEqs}) do not have a solution within 
the classical algebra (\ref{ClassicalPauliAlgebra}). 
Thus, we conclude that there is no classical spin in the non-relativistic framework.
\end{widetext}

\bibliography{bib-relativity}

\end{document}